\newtheorem{theorem}{Theorem}
\newtheorem{lemma}[theorem]{Lemma}
\newtheorem{proposition}[theorem]{Proposition}
\newenvironment{proof}[1][Proof]{\noindent\textbf{#1.} }{\ \rule{0.5em}{0.5em}}
\DeclareMathSymbol{\leqslant}{\mathalpha}{AMSa}{"36} 
\DeclareMathSymbol{\geqslant}{\mathalpha}{AMSa}{"3E} 
\DeclareMathSymbol{\eset}{\mathalpha}{AMSb}{"3F}     
\newcommand{\R}{\mathbb{R}}
\newcommand{\PEfont}{\mathrm}
\newcommand{\E}{\ensuremath{\PEfont E}}
\renewcommand{\epsilon}{\varepsilon} 
\renewcommand{\theta}{\vartheta} 
\renewcommand{\rho}{\varrho} 
\begin{document}
\begin{spacing}{1.3}

\title{\bf Backtesting Lambda Value at Risk}%

  \author{Jacopo Corbetta\\
    \small CERMICS, \'{E}cole des Ponts , UPE, Champs sur Marne, France.\\
   \small Zeliade Systems, 56 rue Jean-Jacques Rousseau, Paris, France.\\
    and \\
    Ilaria Peri \\
   \small Department of Economics, Mathematics and Statistics, Birkbeck University of London, England}
  \maketitle

\begin{abstract}
{\small 
A new risk measure, Lambda value at risk ($\Lambda VaR$), has been recently proposed as a generalization of Value at risk ($VaR$). $\Lambda VaR$ appears attractive for its potential ability to solve several problems of $VaR$. This paper provides the first study on the backtesting of $\Lambda VaR$.  
We propose three nonparametric tests which exploit different features. Two tests are based on simple results of probability theory. One test is unilateral and is more suitable for small samples of observations. A second test is bilateral and provides an asymptotic result. A third test is based on simulations and allows for a more accurate comparison among $\Lambda VaRs$ computed with different assumptions on the asset return distribution. 
Finally, we perform a backtesting exercise that confirms a higher performance of $\Lambda VaR$ in respect to $VaR$ especially when it is estimated with distributions that better capture tail behaviour. 
 } 
\bigskip
\end{abstract}

\noindent
\textit{Keywords}: backtesting, hypothesis testing, model validation, risk management.

\noindent
\textit{JEL Codes}: C12, C52, G32

\end{spacing}

\thispagestyle{empty}

\pagebreak

\section{Introduction}

Risk measurement and its backtesting are matter of primary concern to financial industry. 
Value at risk ($VaR$) has become the most widely used risk measure. Despite its popularity, after 
the recent financial crisis, $VaR$ has been extensively criticized by academics and risk managers. 
Among these critics, we recall the inability to capture the tail risk and the lack of reactivity 
to market fluctuations. Thus, the suggestion of the Basel Committee, in the consultative 
document \cite{FRTB}, is to consider alternative risk measures that are able to overcome the $VaR$'s weaknesses.

A new risk measure, Lambda Value at Risk ($\Lambda VaR$), has been introduced by a theoretical point 
of view by \cite{FMP}. $\Lambda VaR$ is a generalization of the $VaR$ at confidence level $\lambda$. 
Specifically, $\Lambda VaR$ considers a function $\Lambda$ instead of a constant confidence level $\lambda$, 
where $\Lambda$ is a function of the losses. Formally, given a monotone and right continuous function 
$\Lambda: \R \rightarrow (0, 1)$, the $\Lambda VaR$ of the asset return $X$ is a map that associates to its 
cumulative distribution function $F(x) = P(X \leq x) $ the number:
\begin{equation}
\Lambda VaR=-\inf \left\{ x\in \mathbb{R} \mid F(x)>\Lambda (x)\right\}\,.
\label{eq LVaR}
\end{equation}
This new risk measure appears to be attractive for its potential ability to solve several problems of $VaR$. 
First of all, it seems to be flexible enough to discriminate the risk among return distributions with different 
tail behavior, by assigning more risk to heavy-tailed return distributions and less in the opposite case. 
In addition, $\Lambda VaR$ may allow for a rapid changing of the interval of confidence when the market conditions change.

Recently, \cite{HP15} proposed a methodology for computing $\Lambda VaR$ and a first attempt of backtesting based on the hypothesis testing framework by \cite{K95}. In this study, the accuracy of the $\Lambda VaR$ model is evaluated
by considering the maximum of the $\Lambda$ function as confidence level. However, the level of coverage provided by the $\Lambda VaR$ model may not be constant at any time; hence, this method misses to assess the actual $\Lambda VaR$ performance.

The objective of this paper is to propose the first theoretical framework for the backtesting of $\Lambda VaR$. 
We present three backtesting methodologies which exploit different features and may be used with different aims. Our tests evaluate if $\Lambda VaR$ provides an accurate level of coverage, this means that the 
probability of a violation occurring \textit{ex-post} actually coincides with the one predicted by the model. In respect to the hypothesis test proposed in \cite{HP15}, we consider a null hypothesis which better evaluates the benefits introduced by the $\Lambda VaR$ flexibility. Our tests can be easily extended to $VaR$ allowing for a proper comparison among the two risk measures.

Two of these tests are based on simple test statistics whose distribution is obtained by applying results of probability theory. The first test is unilateral and 
provides more precise results for shorter backtesting time windows (e.g. 250 observations). The second test is bilateral 
and provides an asymptotic result that makes it more suitable for larger samples of observations. 

We propose a third test that is inspired to the approach used by \cite{AS14} for the Expected Shortfall backtesting. Here, the distribution of the test statistic is obtained by Monte Carlo simulations. This test allows to better evaluate the impact of the assumption on the model generating data and compare different choices on the asset return distribution. 

Finally, we conduct an empirical analysis where we experiment and compare the results of our backtesting proposals for $\Lambda VaR$, computed using the same dynamic benchmark approach proposed by \cite{HP15}. The backtesting
exercise has been performed along six different time windows throughout all the global financial crisis (2006-2011). 

The paper is structured as follows: Section \ref{sec:models} presents the $VaR$ and $\Lambda VaR$ models; Section \ref{backtest} introduces our backtesting proposals; Section \ref{sec:secempirical} describes and shows the results of the empirical analysis; Appendix collects the proofs.

\section{$VaR$ and $\Lambda VaR$ models}\label{sec:models}

Let us consider a probability space $(\Omega, (\mathcal{F}_t)_T, \mathbb{P}_t)$, 
where the sigma algebra $\mathcal{F}_t$ represents the information at time $t$. We 
assume that $X$ is the random variable of the returns of an asset distributed 
along a real (unknown) distribution $F_t$, i.e. $F_t(x):=\mathbb{P}_t(X_t<x)$, 
and it is forecasted by a model predictive distribution $P_t$ conditional to 
previous information, i.e. $P_t(x) = \mathbb{P}_t(X_t\leq x|\mathcal{F}_{t-1})$. 

We can measure the risk of the asset return $X$ using the classical $VaR$, by 
attributing to $X$ at time $t$ the following value:
\begin{equation}
VaR_t=-\inf \left\{ x\in \mathbb{R} \mid P_t(x)>\lambda\right\}\,.
\label{eq VaR}
\end{equation}
The objective of this study is the alternative risk measure proposed by \cite{FMP}, $\Lambda VaR$, that attributes 
to $X$ at time $t$ the following value:
\begin{equation}
\Lambda VaR_t=-\inf \left\{ x\in \mathbb{R} \mid P_t(x)>\Lambda_t (x)\right\}\,.
\label{eq LVaR2}
\end{equation}
where $\Lambda_t$ is a monotone function that maps $x \in \mathbb{R}$ in 
$(\lambda_m,\lambda_M)$ with $\lambda_{m}>0$ and $\lambda_{M}<1$. When $\Lambda_t$ is constant and equal to $\lambda \in (0,1)$ for any $x$, $\Lambda VaR$ coincides with $VaR$ at confidence level $\lambda$. The interesting feature of $\Lambda VaR$ is the sensitivity to tail risk, in particular, it is able to discriminate the risk of assets having the same $VaR$ at some level $\lambda$ but different tail behavior. Thus, $\Lambda VaR$ may allow to enhance the capital requirement in case of expected greater losses.

\cite{HP15} proposed a method to compute the $\Lambda$ function that is called 
dynamic benchmark approach. Here, the $\Lambda$ function is taken as  proxy 
of the tails of the market return distribution. This feature allows $\Lambda VaR$ to assess the different asset reactions  in respect to the market by detecting different confidence levels. This approach is also dynamic 
since $\Lambda$ is re-estimated at each time $t$ according the information in 
$t-1$. In this way, $\Lambda VaR$ incorporates the recent market fluctuations and adjusts the confidence level according the different asset reactions.   

The authors proposed different models to compute $\Lambda VaR$. One proposal is to obtain $\Lambda$ by linear interpolation of $n$ points $(\pi_{i},\lambda_{i})$ for any $\pi_{1} \leq x < 
\pi_{n}$ and fix $\Lambda$ constantly equal to the lower (upper) bound for any $x \leq \pi_{1}$ and to the upper (lower) bound for any $x \geq \pi_{n}$ in the increasing (decreasing) case. In their empirical analysis, the authors chose 4 points ($n=4$). In particular, on the probability axis, they set the $
\Lambda$ lower bound $\lambda_{m}=0.001$, the upper bound $\lambda_{M}=0.01$ and the 
others $\lambda_{i}$ values, with $i=2,..,3$, by an equipartition of the interval 
$(0,\lambda_{M}]$. On the losses axis, they fix $4$ points $\pi_{i}$ equal to order statistics of the return distribution of some selected market 
benchmarks. Specifically, $\pi_{1}$ is equal to the minimum of all benchmark 
returns: $\pi_{1}=\min x_{t,j}$ where $x_{t,j}$ is the realized return of the 
$j$-th benchmark, for $t=1,..,T$ and $T$ is the time horizon (i.e. number of days in the rolling window), and for $j=1,\ldots,B$ and $B$ is the number of benchmarks; 
$\pi_{2}$, $\pi_{3}$, and $\pi_{4}$ are equal to the maximum, mean, and minimum of 
the benchmarks' $\lambda \%$-$VaR$, respectively.

In the next section, we will recall the first attempt of backtesting for $\Lambda VaR$, explain its limit and introduce our hypothesis test proposals.  

\section{$VaR$ and $\Lambda VaR$ backtesting models}
\label{backtest}
The \cite{BIS96b} refers to backtesting as the process of "comparing daily profits
and losses with model-generated risk measures to gauge the quality and accuracy of risk
measurement systems". A violation occurs when the risk measure estimate is not able to cover the realized return (profit and loss, P$\&$L). In the same Basel 2 Accord, the Commitee has also set up the first regulatory backtesting framework for the $VaR$ measure, known as traffic light approach. This procedure monitors the 1$\%$ $VaR$ violations over the last 250 days. Afterwards, many alternative proposals have been introduced in the literature for $VaR$; we refer to \cite{C05}, \cite{C10}, and \cite{BCP11} for a detailed review.

Let us denote with $x_t$ the realization of the asset return $X$ at time $t$. In 
order to perform the backtesting of a risk measure, we need to construct the sequence of random variables representing the violations, 
$\{I_t\}^T_{t=1}$, across $T$ days, as follows:
\begin{equation}
 I_t=\begin{cases}
      1 \quad \text{if } x_t<y_t\\
      0 \quad \text{otherwise } 
     \end{cases}
\end{equation}

where $y_t$ is the return forecasted by the risk measure. The hit sequence is 
equal to 1 on day $t$ if the realized returns on that day, $x_t$, 
is smaller than the value $y_t$ predicted by the risk measure at time $t-1$ for the day $t$, i.e. $\Lambda VaR_t$ or $VaR_t$. 
If $y_t$ is not exceeded (or violated), then the hit sequence returns a 0. We observe that $I_t$ is a random variable that follows a Bernoulli distribution, that is:
\begin{equation}
I_t \sim B(\lambda_t)
\end{equation}
where $\lambda_t$ is the probability of having an exception at time $t$. 

In the following, we focus on testing the unconditional coverage property of the risk measures that assumes the independence of the violations $I_t$. A common practice in the industry is testing the independence by visual inspection of the cluster of the exceptions (see \cite{AS14}, section 1). We conduct an empirical analysis with the available data which shows that $\Lambda VaR$ clusters the exceptions considerably less than $VaR$ and suggests a higher level of independence of the $\Lambda VaR$ exceptions (as shown in Figure \eqref{figure1}). The reason behind might be that $\Lambda VaR$ is recalculated at each time $t$ incorporating the recent market movements and, in this way, it may avoid sequential violations. However, in order to have a complete assessment of the accuracy of a risk measure, a specific test of independence is required. In the case of $\Lambda VaR$, one cannot rely on the immediate extension of the $VaR$ framework since the exceptions are not identically distributed. This requires a more complex analysis that we leave for a 
future study.

\begin{figure}[h!]

\centering
 \makebox[\textwidth][c]{
\includegraphics[scale=0.28]{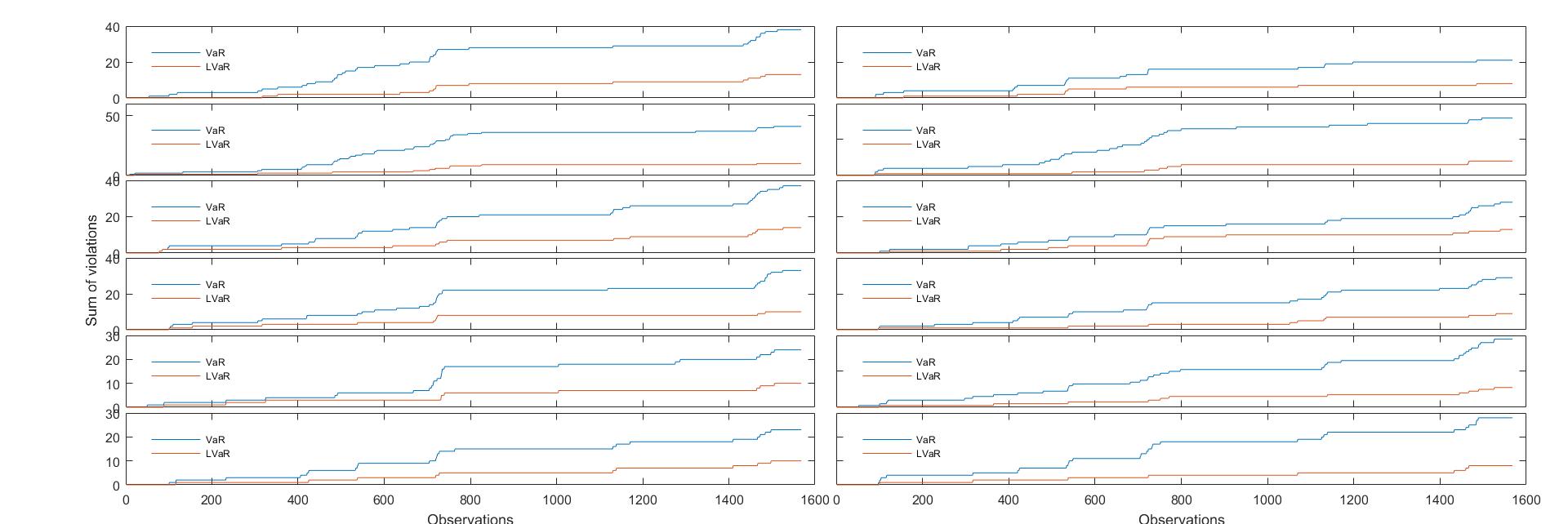}}
 
 \caption{\footnotesize{Time evolution of the sum of violations for $1 \% VaR$ and $\Lambda VaR$. The table shows the evolution over the global financial crisis of the sum of violations of the  $1 \% VaR$ and the increasing $\Lambda VaR$ model.}}  
\label{figure1}

\end{figure}

The first theoretical proposal for the backtesting of $VaR$ is given by \cite{K95}, where the author considers the following null and alternative 
hypothesis:
\begin{equation}\label{eq: ipotesi perihitaj}
\begin{split}
H^{K}_0 &: \lambda_t \leq (=) \lambda^0  \ \ \text{ for any $t$}\\
H^{K}_1 &: \lambda_t > \lambda^0 \ \ \text{ for some $t$ and equal otherwise}
\end{split}
\end{equation} 

where $\lambda^0$ is the $VaR$ confidence level. The $VaR$ at level $\lambda^0$ is accepted if the frequency of the exceptions does not exceed the confidence level $\lambda^0$ for any $t$.

Recently, \cite{HP15} have proposed a backtesting method for $\Lambda VaR$ by adapting the classical Kupiec test 
for $VaR$. They consider the following null and alternative 
hypothesis:
\begin{equation}\label{eq: ipotesi perihitaj1}
\begin{split}
H^{K}_0 &: \lambda_t \leq \text{max}(\Lambda) \ \  \text{ for any $t$}\\
H^{K}_1 &: \lambda_t > \text{max}(\Lambda)\ \ \text{ for some $t$ and equal otherwise}
\end{split}
\end{equation}
Substantially, $\Lambda VaR$ is accepted if the frequency of violations is less than $\text{max}(\Lambda)$. This is an unilateral hypothesis test that can be conducted by using the same log-likelihood 
ratio and critical value of the $VaR$ test. This approach permits to verify if the coverage objective given by the $\Lambda$ maximum has been reached, however, it 
does not allow to evaluate the accuracy of $\Lambda VaR$ at any time $t$. 

Indeed, if the $\Lambda VaR$ model is 
correct, at time $t$ we should be expecting that the hit sequence assumes value 1 with 
probability 
\begin{equation}
\lambda^0_t=\Lambda_t(-\Lambda VaR_t)
\end{equation}
and 0 with probability $1-\lambda^0_{t}$. This intuition is correct if both $\Lambda_t$ and $P_t$ are continuous. In case this does not occur, we have $\lambda^0_t=P_t(-\Lambda VaR_t)$.

As a consequence, the random variables $I_t$ of the violations for $\Lambda VaR$ are 
not identically distributed, which implies that usual likelihood backtesting 
framework (POF by \citealt{K95} , TUFF by \citealt{C10} etc.) cannot be directly applied.

Hence, if $\Lambda VaR$ is correct, the null hypothesis should be: 
\begin{equation}
H_0 : \lambda_t=\lambda^0_t \text{ for any $t$}
\label{null}
\end{equation}
while the alternative hypothesis, either:
\begin{equation}
H_1 : \lambda_t \neq \lambda^0_t \text{ for some $t$ }
\label{alternative1}
\end{equation}
in case of a bilateral test, or:
\begin{equation}
H_1 : \lambda_t > \lambda^0_t \text{ for some $t$ and equal otherwise
}
\label{alternative2}
\end{equation}
in case of an unilateral test where we reject in presence of risk under-estimation. 

The null hypothesis in \eqref{null} allows to evaluate if 
$\Lambda VaR$ guarantees the level of coverage predicted by the $\lambda^0_t$ parameter. In this way, we are able to assess the correctness of $\Lambda VaR$ more precisely than \cite{HP15}. Notice that a rejection of $H^K_0$ in \eqref{eq: ipotesi perihitaj1} implies a rejection of $H_0$ in \eqref{null}. Observe also that these hypothesis tests are also valid for $VaR$ at confidence level $\lambda^0$ by fixing $\lambda^0_t=\lambda^0$ for any $t$. 

In order to test the accuracy of the $\Lambda VaR$ model, we propose three test statistics. The distribution of the first two test statistics is obtained by exploiting simple results of probability theory. In particular, the second test provides an asymptotic result, hence it is more suitable for larger samples of observations (i.e. time horizon larger than 500).

We propose also a third test that is more useful to check if $\Lambda VaR$ has been estimated with the correct distribution function, $P_t$. 
Here, the correctness of the null hypothesis is evaluated by a simulation exercise.  

We suggest that the first two tests are used for an initial validation of the $\Lambda VaR$ model, while the third test is used as second step for selecting the best choice of estimation for the asset return distribution. 

\subsection{Test 1}

We set the null and the alternative hypothesis as in \eqref{null} and \eqref{alternative2}, 
respectively. We construct this first test by defining the test statistic $Z_1$ equal to the 
number of violations over the time horizon $T$, as follows:
\begin{equation}
 Z_1:=\sum_{t=1}^T I_t
\end{equation}
The distribution of $Z_1$ is obtained by applying classical results of 
probability theory. If the violations $I_t$ independently occurs, the sum of 
independent Bernoulli with different mean follows a Poisson Binomial distribution 
$(\lambda_t)$, thus we have that under $H_0$:
\begin{equation}
Z_1 \sim \text{Poiss.Bin}(\{ \lambda^0_t \}).
\end{equation}
This test is in principle a bilateral test, with critical region: 
$C=\left\{z_1: z_1<q_{Z_1}(\frac{\alpha}{2})\right\}\cup\left\{z_1:z_1 \geq q_{Z_1}(1-\frac{\alpha}{2})\right\}$, where $\alpha$ denotes the significance level of the test (i.e. 1 type error) 
and $q_{Z_1}$ is the quantile of the $Z_1$ distribution under $H_0$, i.e. $P_{Z_1}$. However, in the backtesting practice, this test can be treated as unilateral, where the 
critical region is given by:
\begin{equation}
C_{Z_1} =\left \{ z_1:z_1 \geq q_{Z_1}(1-\alpha) \right \} = \left \{ z_1: P_{Z_1}(z_1) > 1-\alpha \right \}
\end{equation}
Indeed, the probability that $z_1$ falls in the left side of the critical region $C$ is null, since $q_{Z_1}(\frac{\alpha}{2})$ is zero any time the following relation is satisfied: $(1-\max({\lambda}_t))^T>\alpha/2$. This is typical for usual test significance levels ($\alpha=10 \%$ or lower), usual time horizon $T=250$ and $1\%$-$VaR$ or $1\%$-$\Lambda VaR$ (since $\lambda_t \leq 0.01$).

This test represents an extension of the traffic light approach by \cite{BIS96b} to $\Lambda VaR$ with two bands instead of three. In particular, for $VaR$ at confidence level $\lambda^0$, under $H_0$ we have:
\[
Z_1 \sim \text{Bin}(T,\lambda^0)
\]
that is $Z_1$ follows a Binomial distribution. In the empirical analysis we fix $\alpha=10 \%$ and we compare the results with $VaR$. 

\subsection{Test 2}
We propose a second test statistic that is founded on a result of probability theory known as Lyapunov theorem. We set the null and the alternative hypothesis as in \eqref{null} and \eqref{alternative1}, respectively. We propose another test statistic defined as follows: 
\begin{equation}
Z_2:=\frac{\sum_{t=1}^T(I_t-\lambda^0_t)}{\sqrt{\sum_1^T\lambda^0_t(1-\lambda^0_t)}}
\end{equation}
Under $H_0$, $Z_2$ is asymptotically distributed as a Standard Normal, formally:
\begin{equation}
Z_2\xrightarrow{d} N(0,1)
\end{equation}
This result follows from the application of Lemma \ref{LemmaLyap} and the 
Lyapunov's theorem (see Appendix for details).

We remark that this is a bilateral test. Thus, we reject the null hypothesis $H_0$ if 
the realization $z_2$ of the test statistic stays in the following critical region: 
\begin{equation}
C_{Z_2}:=\left \{z_2:z_2(x)< q_{Z_2} \left( \frac{\alpha}{2} \right) \right\} \cup \left\{z_2:z_2(x)> q_{Z_2} \left ( 1-\frac{\alpha}{2} \right) \right\}
\end{equation}
where $\alpha$ is the significance level of the test, and $q_{Z_2}$ is the quantile 
function of the Standard Normal distribution $P_{Z_2}$.

Also for this test, in the empirical analysis, we fix $\alpha=10 \%$ and we compare the results with $VaR$. 

\subsection{Test 3}

The third test is inspired by \cite{AS14} and focused 
on another aspect. The aim of this test is to directly verify if $\Lambda VaR$ has been estimated under the correct assumption on the distribution $P_t$ of the returns. To this purpose we build a test statistic, $Z_3$, and we proceed by simulating its distribution using the same assumption as for the asset return distribution in the risk measure computation.

We set the null and the alternative hypothesis as in \eqref{null} and \eqref{alternative2}, 
respectively, and we define $Z_3$ as follows:
\begin{equation} \label{Z3}
 Z_3:=\frac{1}{T}\sum_{t=1}^T (\lambda^0_{t}-I_t)=\frac{1}{T}\sum_{t=1}^T \lambda^0_{t}- \frac{1}{T}\sum_{t=1}^T I_t
\end{equation}

We observe that under $H_0$, we have $\E[Z_3]=0$, while under $H_1$, $\E[Z_3] <0$ for $\Lambda VaR$ (see Proposition \eqref{prop} in Appendix). So, if the model is correct the realized value $z_3$ is expected to be zero. On the other hand, a negative $z_3$ is a signal that the model estimation does not allow for covering the risk. 

Under $H_0$ the distribution of $Z_3$ depends on the assumption for the distribution $P_t$ of the asset returns. Hence, we perform the test by simulating $M$ scenarios of the distribution $P_t$ of the returns at each time $t$, with $t=1, \ldots, T$. In this way, we obtain at time $T$ the 
distribution $P_{Z_3}$ of the test statistic under $H_0$. In order to construct the 
critical region we need to study the behavior of $P_{Z_3}$ when the 
distribution of the returns changes from $P$ to $F$. Let us compute $P_{Z_3}$:
\begin{displaymath}
\begin{split}
P_{Z_3}&=\mathbb{P}\left(Z_3 \leq z \right)=\mathbb{P}\left(\frac{1}{T}\sum_{t=1}^T (\lambda^0_{t}-I_t) \leq z\right)\\
&=\mathbb{P}\left( \sum_{t=1}^T(-I_t) \leq zT-\sum_{t=1}^T \lambda^0_{t}\right) \\
&=\mathbb{P}\left( \sum_{t=1}^T I_t \geq -zT+\sum_{t=1}^T \lambda^0_{t}\right) 
\end{split}
\end{displaymath} 
where $\sum_{t=1}^T I_t$ is distributed as a Binomial Poisson of parameter $\{\lambda_t\}$.
We observe that $P_{Z_3}$ is an increasing function of $\{\lambda_t\}$ (i.e. $P_{Z_3}$ shifts to left when $\lambda_t$ increases).
As a consequence,
given a significance level $\alpha$, we reject the null hypothesis when the p-value $p=P_{Z_3}(z)$ is smaller 
than $\alpha$.

In the empirical analysis we conduct $M=10000$ simulations using the same assumptions on 
the asset return distribution as for the risk measures computation. We set the test significance level $\alpha$ at $10 \%$.

This test allows to verify how the choice of the asset return distribution influences 
the risk coverage capacity of $\Lambda VaR$, that, instead, is not directly assessed 
by Test 1 and Test 2. Hence, the best use of Test 3 is comparing the results between 
the same kind of $\Lambda VaR$ models, but estimated with different assumptions on the P$\&
$L distribution (i.e. Historical, Montecarlo Normal and GARCH, etc.).  

The limit of this test is that requires a massive storage of information, since at time $T$ 
we need all the predictive distributions $P_t$ of the returns for $t=1,\ldots, T$.

\section{Empirical analysis}\label{sec:secempirical}

In this section, we provide an empirical analysis of the backtesting methods of 
$\Lambda VaR$ that we have defined in Section \eqref{backtest}. We applied the tests 
to a slightly different version of the $1\%- \Lambda VaR$ models proposed in \cite{HP15} 
and to the $1\%-VaR$ model. We compare the backtesting results with the Kupiec-type test 
proposed in \cite{HP15} for $\Lambda VaR$ and with the classical Kupiec's test for $VaR$. 

We refer to the same dataset as in \cite{HP15}, consisting in daily data of 12 stocks 
quoted in different countries along different time windows throughout the global 
financial crisis (specifically, from January 2005 to December 2011). These comprise
the stocks of Citigroup Inc. (C UN Equity) and Microsoft Corporation (MSFT UW Equity) 
for the United States, Royal Bank of Scotland Group PLC (RBS LN Equity) and Unilever 
PLC (ULVR LN Equity) for the United Kingdom, Volkswagen AG (VOW3 GY Equity) and Deutsche
Bank AG (DBK GY Equity) for Germany, Total SA (FP FP Equity) and BNP Paribas SA (BNP FP 
Equity) for France, Banco Santander SA (SAN SQ Equity) and Telefonica SA (TEF SQ Equity) 
for Spain, and Intesa Sanpaolo SPA (ISP IM Equity) and Enel SPA (ENEL IM Equity) for Italy. The market benchmarks for the $\Lambda VaR$ computation have been chosen among the market indexes with the highest volume of exchanges; these are S$\&$P500, FTSE 100, and EURO STOXX 50.

The computation of the risk measures is based on different assumptions on the distribution 
of the asset returns. We consider the classical Historical and Normal simulation approach and we add robustness to the analysis by implementing GARCH models with t-student increments and the Extreme Value Theory (EVT) method based on the generalised Pareto distribution (we remand to \cite{McNeil} for a review on this method).
The estimation of the parameters is based on 250 days of observations for the Historical and Normal assumption, while 500 days are considered for the GARCH model. For the Extreme Value Theory method, we implement an automatic routine to identify the threshold in the different time windows.   

The backtesting exercise is conducted comparing the realized ex-post daily returns with the daily $VaR$ and $\Lambda VaR$ estimates of the 12 stocks over 
the time period of 1 year. In particular, we split the analysis into six
different 2-year time windows (250 days for the risk measure computation 
and 1 year for the backtesting).

\subsection{Results}
\subsubsection{Violations and Kupiec test}

We first report the results of the violations and the Kupiec test for the $VaR$ model and 
the Kupiec-type test adapted by \cite{HP15} for the $\Lambda VaR$ model. We compute the average 
number of violations and acceptance rate over all the assets and different time horizon
$T$. The results presented, hereafter, in Table \eqref{Table1} are under the assumption of Historical distribution of the asset returns.
\begin{center}
\begin{table}[H]
 {\resizebox{1.0 \textwidth}{!}{ \renewcommand\arraystretch{2.0}  \begin{tabular}[htb] {rr |rrrrrrr|rrrrrr}
& \multicolumn{1}{c}{} & \multicolumn{6}{c}{\textit{\textbf{Average number of violations }}} &       \multicolumn{1}{c}{}      &        \multicolumn{6}{c}{\textit{\textbf{Kupiec-Test }}} \\

\hline

& \multicolumn{1}{c}{\textit{}} & \multicolumn{1}{|c}{2006} & \multicolumn{1}{c}{2007} & \multicolumn{1}{c}{2008} & \multicolumn{1}{c}{2009} & \multicolumn{1}{c}{2010} & \multicolumn{1}{c}{2011} & \multicolumn{1}{c}{} & \multicolumn{1}{|c}{2006} & \multicolumn{1}{c}{2007} & \multicolumn{1}{c}{2008} & \multicolumn{1}{c}{2009} & \multicolumn{1}{c}{2010} & \multicolumn{1}{c}{2011} \\

	
\hline

\multicolumn{1}{c|}{\textit{VaR}}& \multicolumn{1}{|c|}{\textit{1\%}} &3.42 & 5.33 & 11.58 & 0.75 & 3.08 & 6.83& 
\multicolumn{1}{c|}{\multirow{5}[10]{*}} 
& 100 \% & 83 \% & 0 \% & 100 \% & 92 \% & 50 \% \\

\hline

\begin{sideways}\end{sideways} & \textit{} &  \textbf{3.42} & \textbf{5.33} & \textbf{11.58} &\textbf{ 0.75} & \textbf{3.08} & \textbf{6.83} & \begin{sideways}\end{sideways} & \textbf{100\%} & \textbf{83\%} & \textbf{0\%} & \textbf{100\%} & \textbf{92\%} & \textbf{50\%} \\

\hline

\multicolumn{1}{c|}{\multirow{2}[4]{*}{\textit{$\Lambda VaR$ $1\%$ (decr)}}} & \multicolumn{1}{|c|}{\textit{ (VaR 5\%)}} 
& 2.25 & 3.67 & 7.00 & 0.67 & 2.00 & 4.25 & \multicolumn{1}{c|}{\multirow{2}[4]{*}{}}  & 100 \% & 83 \% & 42 \% & 100 \% & 100 \% & 83 \% \\

\multicolumn{1}{c|}{} & \multicolumn{1}{|c|}{\textit{ (VaR 1\%)}} 
& 2.17 & 2.33 & 5.75 & 0.67 & 1.58 & 4.00 & \multicolumn{1}{c|}{}  & 100 \% & 83 \% & 67 \% & 100 \% & 100 \% & 83 \% \\

\hline

\begin{sideways}\end{sideways} & \textit{} & \textbf{2.21} & \textbf{3.00} & \textbf{6.38 }& \textbf{0.67} & \textbf{1.79} & \textbf{4.13}  & \multicolumn{1}{c|}{\begin{sideways}\end{sideways}}& \textbf{100 \%} & \textbf{83 \%} & \textbf{54 \%} & \textbf{100 \%} & \textbf{100 \%} & \textbf{83 \%}\\

\hline

\multicolumn{1}{c|}{\multirow{2}[4]{*}{\textit{$\Lambda VaR$ $1\%$ (incr)}}} & \multicolumn{1}{|c|}{\textit{ (VaR 5\%)}} & 1.17 & 1.00 & 3.92 & 0.42 & 0.92 & 2.75 & \multicolumn{1}{c|}{\multirow{2}[4]{*}{}}  & 100 \% & 100 \% & 100 \% & 100 \% & 100 \% & 100 \% \\

\multicolumn{1}{c|}{} & \multicolumn{1}{|c|}{\textit{ (VaR 1\%)}} 
& 1.17 & 1.08 & 3.92 & 0.42 & 1.00 & 2.75  & \multicolumn{1}{c|}{}  & 100 \% & 100 \% & 100 \% & 100 \% & 100 \% & 100 \% \\

\hline

\begin{sideways}\end{sideways} & \textit{} & \textbf{1.17} & \textbf{1.04} & \textbf{3.92} & \textbf{0.42} & \textbf{0.96} & \textbf{2.75}  & \begin{sideways}\end{sideways} & \textbf{100 \%} & \textbf{100 \%} & \textbf{100 \%} & \textbf{100 \%} & \textbf{100 \%} & \textbf{100 \%} \\
\hline 
\end{tabular}
		}}
\caption{\footnotesize{Time evolution of the average number of violations and Kupiec test under the Historical distribution assumption.
The table shows the evolution over the global financial crisis of the average number of violations 
and the percentage of Kupiec acceptance, aggregated at the level of $1 \% VaR$, 
as well as the increasing and decreasing $\Lambda VaR$ models.}}
\label{Table1}
\end{table}
\end{center}
As expected and already pointed out in \cite{HP15} the average number of violations of $1\%$ $VaR $ 
is bigger than the one of $\Lambda VaR$, in particular if compared with the increasing models.
In fact $1\%$ $VaR$ shows a drastic increase in the average number of violations,
moving from 3.42 in 2006 to 11.58 in 2008. On the other hand, the increasing $\Lambda VaR$ 
models register an average number of violations of around 1.17 during 2006 and retain
the number at around 3.92 in the 2008 crisis.

This result was expected since the $\Lambda$ function has been built with $\max_x \Lambda_t(x)=0.01$, 
which implies that $\Lambda VaR$ is always greater or equal than $1\%$ $VaR$, so that, losses not 
covered by the first are also not covered by the latter. This implies that $\Lambda VaR$ 
performs always better than $1 \%$ $VaR$ by using an unilateral Kupiec-type test, since 
this kind of test does not capture the variability of the $\Lambda$ function that is the essential feature of $\Lambda VaR$.

The violations trend is the same also under the other distribution's assumptions taken in exam as shown in Table \eqref{Table2}.
\begin{center}
\begin{table}[H]
 {\resizebox{1.0 \textwidth}{!}{ \renewcommand\arraystretch{2.0}  \begin{tabular}[htb] {rr|rrrrrr|rrrrrr|rrrrrr}

&\multicolumn{1}{c}{} & \multicolumn{6}{c}{\textit{\textbf{Normal }}} &    \multicolumn{6}{c}{\textit{\textbf{GARCH }}} &
\multicolumn{6}{c}{\textit{\textbf{EVT }}} \\

\hline

& &  2006 & 2007 & 2008 & 2009 & 2010 & 2011
&  2006 & 2007 & 2008 & 2009 & 2010 & 2011
&  2006 & 2007 & 2008 & 2009 & 2010 & 2011\\ 	
 \hline

\multicolumn{1}{c|}{\textit{VaR}} & \textit{1\%}
 
& 4.58  & 7.08 & 14.92 & 1.75 & 4.17 & 9.42 
& 3.17 & 6.83 & 8.25 & 0.33 & 0.75 & 4.33 
& 3.42 & 5.33 & 11.58 & 0.83 & 3.08 & 6.92
\\
\hline

& &  \textbf{4.58} & \textbf{7.08} & \textbf{14.92} &\textbf{1.75} & \textbf{4.17} & \textbf{9.42}
 & \textbf{3.17} & \textbf{6.83} & \textbf{8.25} & \textbf{0.33} & \textbf{0.75} & \textbf{4.33} 
 & \textbf{3.42} & \textbf{5.33} & \textbf{11.58} & \textbf{0.83} & \textbf{3.08} & \textbf{6.92}\\

\hline
 
\multicolumn{1}{c|}{\multirow{2}[4]{*}{\textit{$\Lambda VaR$ $1\%$ (decr)}}} & \textit{ (VaR 5\%)} 

& 4.42  & 6.75 & 14.25 & 1.58 & 3.75 & 9.17 

 & 3.08 & 5.83 & 7.33 & 0.33 & 0.42 & 4.25 
 & 2.33 & 2.33 & 7.25 & 0.75 & 1.75 & 4.25
\\
 
\multicolumn{1}{c|}{} & \textit{ (VaR 1\%)} 

 & 4.25  & 5.83 &13.08 & 1.42 & 3.42 & 8.58
  & 2.75  & 4.75 & 6.42 & 0.25 & 0.33 & 3.92 
  & 2.08 & 2.08 & 6.92 & 0.75 & 1.67 & 4.08
\\
 \hline
& & \textbf{4.33} & \textbf{6.29} & \textbf{13.67} & \textbf{1.50} & \textbf{3.58} & \textbf{8.88}
& \textbf{2.92} & \textbf{5.29} & \textbf{6.88} & \textbf{0.29} & \textbf{0.38} & \textbf{4.08}
 & \textbf{2.21} & \textbf{2.21} & \textbf{7.08} & \textbf{0.75} & \textbf{1.71} & \textbf{4.17}\\
 \hline
 
\multicolumn{1}{c|}{\multirow{2}[4]{*}{\textit{$\Lambda VaR$ $1\%$ (incr)}}} & \textit{ (VaR 5\%)}  
 & 3.33 & 4.75 & 10.83 & 0.92 & 2.75 & 6.67 
  & 1.25  & 2.67 & 3.58 & 0.00 & 0.17 & 1.42 
  & 1.25 & 1.00 & 4.25 & 0.42 & 0.92 & 2.75
\\
\multicolumn{1}{c|}{} & \textit{ (VaR 1\%)} 
  & 3.33 & 5.08 & 11.67 & 1.17 & 3.00 & 7.00
  & 1.25 & 2.83 & 3.50 & 0.00 & 0.33 & 1.42 
  & 1.25 & 1.25 & 4.33 & 0.42 & 1.00 & 2.75
\\
 \hline
&  & \textbf{3.33} & \textbf{4.92} & \textbf{11.25} & \textbf{1.04} & \textbf{2.88} & \textbf{6.83}
 & \textbf{1.25} & \textbf{2.75} & \textbf{3.54} & \textbf{0.00} & \textbf{0.25} & \textbf{1.42}
  & \textbf{1.25} & \textbf{1.13} & \textbf{4.29} & \textbf{0.42} & \textbf{0.96} & \textbf{2.75} \\
\hline 
\end{tabular}
		}}
\caption{\footnotesize{Time evolution of the average number of violations under the Normal, GARCH and EVT model.
The table shows the evolution over the global financial crisis of the average number of violations aggregated at the level of $1 \% VaR$, 
as well as the increasing and decreasing $\Lambda VaR$ models.}}
\label{Table2}
\end{table}
\end{center}

\subsubsection{Test 1 and Test 2: comparison of $VaR$ and $\Lambda VaR$ risk coverage}

In Table \eqref{Table3} and \eqref{Table4} we show the results of Test 1 and 2 proposed in 
Section \eqref{backtest} for $\Lambda VaR$.
The results here presented are under different assumptions of the distribution of the assets return, specifically, Historical, Normal, GARCH and EVT method.

\begin{center}
\begin{table}[H]
 {\resizebox{1.0 \textwidth}{!}{ \renewcommand\arraystretch{2.0}  \begin{tabular}[htb] {r r|rrrrrr|rrrrrr}

  & \multicolumn{1}{c}{} & \multicolumn{6}{c}{\textit{\textbf{Historical}}} &   \multicolumn{6}{c}{\textit{\textbf{Normal }}}  \\

\hline
& &  2006 & 2007 & 2008 & 2009 & 2010 & 2011
&  2006 & 2007 & 2008 & 2009 & 2010 & 2011\\

 \hline
 \multicolumn{1}{c|}{\textit{VaR}} & \textit{1\%} 
 & 100\% & 58\% & 0\% & 100\% & 75\% & 25\% 
 & 58\%  & 33\% & 0\% & 92\% & 50\% & 8\%\\
 \hline
 
 & & \textbf{100\%} & \textbf{58\%} & \textbf{0\%} & \textbf{100\%} & \textbf{75\%} & \textbf{25\%}
&  \textbf{58\%} & \textbf{33\%} & \textbf{0\%} &\textbf{92\%} & \textbf{50\%} & \textbf{8\%} \\
 \hline
 
\multicolumn{1}{c|}{\multirow{2}[4]{*}{\textit{$\Lambda VaR$ $1\%$ (decr)}}} 
 & \textit{ (VaR 5\%)}
 & 100 \% & 75 \% & 17 \% & 100 \% & 92 \% & 67 \% 
 & 42\%  & 8\% & 0\% & 83\% & 50\% & 8\%\\
\multicolumn{1}{c|}{} & \textit{ (VaR 1\%)} 
& 92 \% & 83 \% & 25 \% & 100 \% & 100 \% & 75 \%  
  & 33\%  & 25\% & 0\% & 92\% & 42\% & 8\% \\
 \hline

 & \textit{} & \textbf{96\%} & \textbf{79\%} & \textbf{21\%} & \textbf{100\%} & \textbf{96\%} & \textbf{71\%}
&  \textbf{38\%} & \textbf{17\%} & \textbf{0\%} &\textbf{88\%} & \textbf{46\%} & \textbf{8\%}\\
 \hline
 \multicolumn{1}{c|}{\multirow{2}[4]{*}{\textit{$\Lambda VaR$ $1\%$ (incr)}}} 
 & \multicolumn{1}{c|}{\textit{ (VaR 5\%)}}
 & 75 \% & 83 \% & 0 \% & 100 \% & 83 \% & 25 \%  
 & 0 \% & 0 \% & 0 \% & 42 \% & 33 \% & 8 \% \\
 \multicolumn{1}{c|}{} & \multicolumn{1}{|c|}{\textit{ (VaR 1\%)}} 
  & 75 \% & 83 \% & 0 \% & 100 \% & 75 \% & 17 \%  
 & 8 \% & 8 \% & 0 \% & 42 \% & 42 \% & 8 \%   \\
 
\hline

& \textit{} &   \textbf{75\% }& \textbf{83\%} & \textbf{0\%} & \textbf{100\%} & \textbf{79\%} & \textbf{21\%} 
 & \textbf{4\%} & \textbf{4\%} & \textbf{0\%} & \textbf{42\%} & \textbf{38\%} & \textbf{8\%}\\
 \hline

 & \multicolumn{1}{c}{} & \multicolumn{6}{c}{\textit{\textbf{GARCH}}} &   \multicolumn{6}{c}{\textit{\textbf{EVT }}}  \\

 \hline
 \multicolumn{1}{c|}{\textit{VaR}} & \textit{1\%} & 75\% & 50\% & 33\% & 100\% & 100\% & 67\% 
 & 100\%  & 58\% & 0\% & 100\% & 75\% & 25\%\\
 \hline
 
 & & \textbf{75\%} & \textbf{50\%} & \textbf{33\%} & \textbf{100\%} & \textbf{100\%} & \textbf{67\%}
&  \textbf{100\%} & \textbf{58\%} & \textbf{0\%} &\textbf{100\%} & \textbf{75\%} & \textbf{25\%} \\
 \hline
 
\multicolumn{1}{c|}{\multirow{2}[4]{*}{\textit{$\Lambda VaR$ $1\%$ (decr)}}} 
 & \textit{ (VaR 5\%)}
 & 75\% & 50\% & 33\% & 100\% & 100\% & 67\% 
 & 100 \% & 92 \% & 8 \% & 100 \% & 92 \% & 50 \%
\\
\multicolumn{1}{c|}{} & \textit{ (VaR 1\%)} 
& 67\%  & 67\% & 33\% & 100\% & 100\% & 67\% 
  & 100 \% & 92\% & 8 \% & 100 \% & 100 \% & 58 \% 
 \\
 \hline

 & \textit{} & \textbf{71\%} & \textbf{58\%} & \textbf{33\%} & \textbf{100\%} & \textbf{100\%} & \textbf{67\%}
&  \textbf{100\%} & \textbf{92\%} & \textbf{8\%} &\textbf{100\%} & \textbf{96\%} & \textbf{54\%}\\
 \hline
 \multicolumn{1}{c|}{\multirow{2}[4]{*}{\textit{$\Lambda VaR$ $1\%$ (incr)}}} 
 & \multicolumn{1}{c|}{\textit{ (VaR 5\%)}}
 & 67\% & 58\% & 25\% & 100\% & 92\% & 58\% 
& 67 \% & 83 \% & 0 \% & 100 \% & 83 \% & 17 \%
 \\
 \multicolumn{1}{c|}{} & \multicolumn{1}{|c|}{\textit{ (VaR 1\%)}}
 & 75\%  & 50\% & 25\% & 100\% & 92\% & 58\% 
 & 67 \% & 67 \% & 0 \% & 100 \% & 75 \% & 25 \%
 \\
 
\hline

& \textit{} &   \textbf{71\% }& \textbf{54\%} & \textbf{25\%} & \textbf{100\%} & \textbf{92\%} & \textbf{58\%} 
 & \textbf{67\%} & \textbf{75\%} & \textbf{0\%} & \textbf{100\%} & \textbf{79\%} & \textbf{21\%}\\
 \hline

\end{tabular}
}}
\caption{\footnotesize{Time evolutions of Test 1 for the $\Lambda VaR$ models under 
different assumptions of the P$\&$L distribution. The table shows the evolution over the 
global financial crisis of the acceptance rates, aggregated at the level of the $\Lambda 
VaR$ models ($\min_x\Lambda(x)=0.5 \%$) calculated using the Historical, Normal, GARCH and EVT 
assumption of the P$\&$L distribution.}}
\label{Table3}
\end{table}
\end{center}

\begin{center}
\begin{table}[H]
 {\resizebox{1.0 \textwidth}{!}{ \renewcommand\arraystretch{2.0}  \begin{tabular}[htb] {r r | rrrrrr|rrrrrr}

 & \multicolumn{1}{c}{} & \multicolumn{6}{c}{\textit{\textbf{Historical}}} &   \multicolumn{6}{c}{\textit{\textbf{Normal }}} \\

\hline
& &  2006 & 2007 & 2008 & 2009 & 2010 & 2011
&  2006 & 2007 & 2008 & 2009 & 2010 & 2011\\

 \hline
 \multicolumn{1}{c|}{\textit{VaR}} & \multicolumn{1}{c|}{\textit{1\%} }
&100 \% & 75 \% & 0 \% & 100 \% & 92 \% & 42 \% 
& 58\%  & 42\% & 0\% & 100\% & 67\% & 25\%\\
 \hline
 
 & & \textbf{100\%} & \textbf{75\%} & \textbf{0\%} & \textbf{100\%} & \textbf{92\%} & \textbf{42\%}
&  \textbf{58\%} & \textbf{42\%} & \textbf{0\%} &\textbf{100\%} & \textbf{67\%} & \textbf{25\%} \\
 \hline
 
\multicolumn{1}{c|}{\multirow{2}[4]{*}{\textit{$\Lambda VaR$ $1\%$ (decr)}}} 
 & \textit{ (VaR 5\%)}
 & 100\% & 83\% & 17\% & 100\% & 100\% & 75\% 
  & 58\%  & 42\% & 0\% & 100\% & 67\% & 17\%\\
\multicolumn{1}{c|}{} & \textit{ (VaR 1\%)} 
 & 100\% & 83\% & 42\% & 100\% & 100\% & 83\%  
  & 50\%  & 50\% & 0\% & 100\% & 67\% & 17\% \\
 \hline

 & \textit{} & \textbf{100\%} & \textbf{83\%} & \textbf{29\%} & \textbf{100\%} & \textbf{100\%} & \textbf{79\%}
&  \textbf{54\%} & \textbf{46\%} & \textbf{0\%} &\textbf{100\%} & \textbf{67\%} & \textbf{17\%}\\
 \hline
 \multicolumn{1}{c|}{\multirow{2}[4]{*}{\textit{$\Lambda VaR$ $1\%$ (incr)}}} 
 & \multicolumn{1}{c|}{\textit{ (VaR 5\%)}}
 & 100\% & 100\% & 17\% & 100\% & 92\% & 42\%  
& 17\% & 25\% & 0\% & 92\% & 50\% & 8\% \\
 \multicolumn{1}{c|}{} & \multicolumn{1}{|c|}{\textit{ (VaR 1\%)}} 
 & 100\% & 100\% & 17\% & 100\% & 92\% & 42\% 
& 25\% & 33\% & 0\% & 83\% & 58\% & 25\%\\
 
\hline

& \textit{} &   \textbf{100\%} & \textbf{100\%} & \textbf{17\%} & \textbf{100\%} & \textbf{92\%} & \textbf{42\%}

 & \textbf{21\%} & \textbf{29\%} & \textbf{0\%} & \textbf{88\%} & \textbf{54\%} & \textbf{17\%}\\
 \hline

 & \multicolumn{1}{c}{} & \multicolumn{6}{c}{\textit{\textbf{GARCH}}} &   \multicolumn{6}{c}{\textit{\textbf{EVT }}}  \\

\hline
 \multicolumn{1}{c|}{\textit{VaR}} & \textit{1\%}  & 83\% & 58\% & 42\% & 100\% & 100\% & 67\% 
 &100 \% & 75 \% & 0 \% & 100 \% & 92 \% & 33 \% \\
 \hline
 
 & & \textbf{83\%} & \textbf{58\%} & \textbf{42\%} & \textbf{100\%} & \textbf{100\%} & \textbf{67\%}
&  \textbf{100\%} & \textbf{75\%} & \textbf{0\%} &\textbf{100\%} & \textbf{92\%} & \textbf{33\%} \\
 \hline
 
\multicolumn{1}{c|}{\multirow{2}[4]{*}{\textit{$\Lambda VaR$ $1\%$ (decr)}}} 
 & \textit{ (VaR 5\%)}
  & 83\% & 58\% & 33\% & 100\% & 100\% & 67\% 
&100 \% & 92 \% & 8 \% & 100 \% & 100 \% & 67 \% \\
 
\multicolumn{1}{c|}{} & \textit{ (VaR 1\%)} 
 & 92\%  & 75\% & 42\% & 100\% & 100\% & 75\% 
  & 100 \% & 92 \% & 17 \% & 100 \% & 100 \% & 67 \%  
 \\
 \hline

& & \textbf{88\%} & \textbf{67\%} & \textbf{38\%} & \textbf{100\%} & \textbf{100\%} & \textbf{71\%}
&  \textbf{100\%} & \textbf{92\%} & \textbf{13\%} &\textbf{100\%} & \textbf{100\%} & \textbf{67\%}\\
 \hline
 \multicolumn{1}{c|}{\multirow{2}[4]{*}{\textit{$\Lambda VaR$ $1\%$ (incr)}}} 
 & \multicolumn{1}{c|}{\textit{ (VaR 5\%)}}
   & 92\% & 75\% & 67\% & 100\% & 100\% & 83\%
& 100 \% & 100 \% & 17 \% & 100 \% & 92 \% & 42 \%  \\
 
 \multicolumn{1}{c|}{} & \multicolumn{1}{|c|}{\textit{ (VaR 1\%)}}
 & 92\% & 67\% & 67\% & 100\% & 92\% & 83\%  
 & 100 \% & 92 \% & 17 \% & 100 \% & 92 \% & 42 \%
 \\
 
\hline
& & \textbf{92\%} & \textbf{71\%} & \textbf{67\%} & \textbf{100\%} & \textbf{96\%} & \textbf{83\%}
 & \textbf{100\%} & \textbf{96\%} & \textbf{17\%} & \textbf{100\%} & \textbf{92\%} & \textbf{42\%}\\
 
 \hline

\end{tabular}
}}
\caption{\footnotesize{Time evolutions of Test 2 for the $\Lambda VaR$ models under 
different assumptions of the P$\&$L distribution. The table shows the evolution over the 
global financial crisis of the acceptance rates, aggregated at the level of the $\Lambda 
VaR$ models ($\min_x\Lambda(x)=0.5 \%$) calculated using the Historical, Normal, GARCH and EVT
assumption of the P$\&$L distribution.}}
\label{Table4}
\end{table}
\end{center}

We first notice that the acceptance rate of these tests is lower than the unilateral Kupiec test in \cite{HP15}. 
This is due to the particular construction of the Kupiec test. Indeed, this test is useful to assess if the 
$\Lambda VaR$ model guarantees an acceptable coverage given by $\max(\Lambda)$, but cannot capture the daily variations of the confidence level $\lambda^0_t$ of $\Lambda VaR$. Thus, it cannot be used to evaluate the real coverage offered by $\Lambda VaR$ at time $t$. On the other hand, the coverage tests that we have proposed are able to better evaluate if the flexibility 
introduced by the $\Lambda$ function helps to detect adverse scenario and put aside a more adequate amount of capital. 

If we compare the tests results, we observe that for all the models Test 2 provides higher acceptance rates in respect to Test 1. This may be due to the fact that Test 1 returns more precise results 
 with smaller number of observations and also to its unilateral nature that attributes the highest weight to the violations.   

With the exception of the normal estimator, the $\Lambda VaR$ models result often more accurate than $1\%$ $VaR$, confirming the outcomes 
in \cite{HP15}. This means that the highest flexibility of $\Lambda VaR$ contributes to the highest 
coverage, especially when it is computed with distributions that better capture the tail behaviour. In our tests, the decreasing $\Lambda VaR$ models 
seem to be more accurate, in contrast with the results of the Kupiec test. We think this is a consequence of a lower power of these tests for the decreasing $\Lambda VaR$ models. We remand the analysis of the test power for further research since it would complicate this study without adding significant value.

\subsubsection{The choice of the $\Lambda$ minimum}\label{sec:minimum}

During the analysis of the results, Test 1 and Test 2 have pointed out an issue of 
estimation in the $\Lambda VaR$ models proposed by \cite{HP15}. In particular, the authors do not discuss in details the choice of the $\Lambda$ minimum, 
$\min_x\Lambda(x)$, that seems to be set equal to $0.1\%$ after empirical experimentations. In addition, the extended Kupiec test proposed by the authors could not identify the impact of this choice.

When we have run for the first time Test 1 and 2 using the choice of \cite{HP15}, $\min_x\Lambda(x)=0.1 \%$, we have noticed that the increasing $\Lambda VaR$ models presented the highest rejection rate, even if they had the smallest number of infractions, as shown by Table \eqref{Table5}.
\begin{center}
\begin{table}[H]
 {\resizebox{1 \textwidth}{!}{ \renewcommand\arraystretch{2.0} \begin{tabular}[htb] {rr|rrrrrr r|rrrrrr}

 & \multicolumn{1}{c}{}& \multicolumn{6}{c}{\textit{\textbf{Test 1}}} &             \multicolumn{6}{c}{\textit{\textbf{Test 2}}} \\
\hline
& \multicolumn{1}{c}{\textit{}} & \multicolumn{1}{|c}{2006} & \multicolumn{1}{c}{2007} & \multicolumn{1}{c}{2008} & \multicolumn{1}{c}{2009} & \multicolumn{1}{c}{2010} & \multicolumn{1}{c}{2011} & \multicolumn{1}{c}{} & \multicolumn{1}{|c}{2006} & \multicolumn{1}{c}{2007} & \multicolumn{1}{c}{2008} & \multicolumn{1}{c}{2009} & \multicolumn{1}{c}{2010} & \multicolumn{1}{c}{2011} \\

\hline

\multicolumn{1}{c|}{\multirow{2}[4]{*}{\textit{$\Lambda VaR$ $1\%$ (decr)}}} & \multicolumn{1}{|c|}{\textit{ (VaR 5\%)}} & 100 \% & 75 \% & 8 \% & 100 \% & 92 \% & 67 \% 
& \multicolumn{1}{c|}{\multirow{2}[4]{*}{}}  & 100 \% & 83 \% & 17 \% & 100 \% & 100 \% & 75 \% \\

\multicolumn{1}{c|}{} & \multicolumn{1}{|c|}{\textit{ (VaR 1\%)}} & 92 \% & 83 \% & 25 \% & 100 \% & 100 \% & 67 \%
&  \multicolumn{1}{c|}{}  & 100 \% & 83 \% & 42 \% & 100 \% & 100 \% & 83 \% \\
\hline

\begin{sideways}\end{sideways} & \textit{} &   \textbf{96 \% }& \textbf{79 \%} & \textbf{17 \%} & \textbf{100 \%} & \textbf{96 \%} & \textbf{67 \%} & 
 \multicolumn{1}{c|}{\begin{sideways}\end{sideways}}& \textbf{100 \%} & \textbf{83 \%} & \textbf{29 \%} & \textbf{100 \%} & \textbf{100 \%} & \textbf{79 \%}\\
\hline

\multicolumn{1}{c|}{\multirow{2}[4]{*}{\textit{$\Lambda VaR$ $1\%$ (incr)}}} & \multicolumn{1}{|c|}{\textit{(VaR 5\%)}} &  8 \% &  17 \% & 0 \% & 58 \% & 42 \% & 8 \% 
&  \multicolumn{1}{c|}{\multirow{2}[4]{*}{}}  & 75 \% & 83 \% & 0 \% & 100 \% & 75 \% & 17 \%  \\

\multicolumn{1}{c|}{} & \multicolumn{1}{|c|}{\textit{ (VaR 1\%)}} &  8 \% & 17 \% & 0 \% & 58 \% & 42 \% & 8 \% 
&  \multicolumn{1}{c|}{}  & 75 \% & 83 \% & 0 \% & 100 \% & 83 \% & 25 \% \\
\hline

\begin{sideways}\end{sideways} & \textit{} &   \textbf{8 \%} & \textbf{17 \%} & \textbf{0 \%} & \textbf{58 \%} & \textbf{42 \%} & \textbf{8 \%} 
&\begin{sideways}\end{sideways} & \textbf{75 \%} & \textbf{83 \%} & \textbf{0 \%} & \textbf{100 \%} & \textbf{79 \%} & \textbf{21 \%} \\
\hline 
\end{tabular}
}}
\caption{\footnotesize{Time evolutions of Test 1 and Test 2 for the $\Lambda VaR$ models 
with $\min_x\Lambda(x)=0.1 \%$ under the Historical distribution assumption. The table shows the evolution over the global financial crisis 
of the acceptance rates, aggregated at the level of the $\Lambda VaR$ models with $\min_x\Lambda(x)=0.1 \%$.}}
\label{Table5}
\end{table}
\end{center}

Thus, we have studied how the probability of infraction $\lambda_t$ evolves in the different $\Lambda VaR$ models and we have 
observed that in most of the cases it obtains the minimal value. This happens 
especially during crisis periods, when the cumulative distribution function of the assets 
shifts on the left and intersects the $\Lambda$ function at the minimum level. In such a case, the choice of the $\Lambda$ minimum is relevant and also a critical issue.

From our point of view, the $\Lambda$ minimum should 
provide the probability to lose more than the \textit{worst case event} (i.e. benchmarks' 
minimum, $\pi_{1}=\min x_{t,j}$) over the time window observations (i.e. 250 in our case). 
If we consider all the events equally probable, the selection of the $\Lambda$ minimum 
should be greater than $1/T$ over $T$ observations. Thus, we propose to compute the $\Lambda VaR$ models by fixing the $\Lambda$ minimum 
equal to $0.5\%$, i.e. $\min_x\Lambda(x)=0.005$, since the probability of an event over 250 past realizations is $0.4\%$. 

The results of the $\Lambda VaR$ estimations with $0.5\%$ minimum have been shown before in Table \eqref{Table3} and \eqref{Table4}. The number of infractions does not 
change in any period under consideration,
while the acceptance rate of the increasing $\Lambda VaR$ models drastically increases, validating our choice. Clearly, this new setting does not affect 
the decreasing $\Lambda VaR$ models.
Anyway, the choice of the $\Lambda$ minimum can be refined considering more precise 
evaluation of the probability of the worst case event, but this is beyond the objective 
of this paper.

\subsubsection{Test 3: comparison of $\Lambda VaRs$ with different distribution estimations}
As anticipated in Section \eqref{backtest}, the best use of Test 3 is the 
comparison of the accuracy of the risk measures computed with different estimations of asset return distribution.  We 
compute the time evolution of the acceptance rate aggregated at the level of the increasing 
and decreasing $\Lambda VaR$ models. We repeat the analysis changing the assumption on the 
asset return distribution: specifically, Historical, Monte Carlo Normal, GARCH and EVT method. The 
results are presented in Table \eqref{Table6}

\begin{center}
\begin{table}[H]
 {\resizebox{1.0 \textwidth}{!}{ \renewcommand\arraystretch{2.0}  \begin{tabular}[htb] {r r|rrrrrr|rrrrrr}

  & \multicolumn{1}{c}{} & \multicolumn{6}{c}{\textit{\textbf{Historical}}} &   \multicolumn{6}{c}{\textit{\textbf{Normal }}}  \\

\hline
& &  2006 & 2007 & 2008 & 2009 & 2010 & 2011
&  2006 & 2007 & 2008 & 2009 & 2010 & 2011\\

 \hline
 \multicolumn{1}{c|}{\textit{VaR}} & \textit{1\%} 
 & 50\% & 33\% & 0\% & 100\% & 58\% & 25\% 
 
 & 58\%  & 33\% & 0\% & 92\% & 50\% & 8\%\\
 \hline
 
& & \textbf{50\%} & \textbf{33\%} & \textbf{0\%} & \textbf{100\%} & \textbf{58\%} & \textbf{25\%}
&  \textbf{58\%} & \textbf{33\%} & \textbf{0\%} &\textbf{92\%} & \textbf{50\%} & \textbf{8\%} \\
 
 \hline
 
\multicolumn{1}{c|}{\multirow{2}[4]{*}{\textit{$\Lambda VaR$ $1\%$ (decr)}}} 
 & \textit{ (VaR 5\%)}& 50\% & 33\% & 0\% & 100\% & 67\% & 17\% 
 & 58\%  & 42\% & 0\% & 92\% & 58\% & 17\%\\
 
\multicolumn{1}{c|}{} & \textit{ (VaR 1\%)} & 58\% & 50\% & 8\% & 100\% & 67\% & 8\% 
  & 50\%  & 33\% & 0\% & 92\% & 58\% & 25\% \\
 \hline

&  & \textbf{54\%} & \textbf{42\%} & \textbf{4\%} & \textbf{100\%} & \textbf{67\%} & \textbf{13\%}
&  \textbf{54\%} & \textbf{38\%} & \textbf{0\%} & \textbf{92\%} & \textbf{58\%} & \textbf{21\%}\\
 
 \hline
 \multicolumn{1}{c|}{\multirow{2}[4]{*}{\textit{$\Lambda VaR$ $1\%$ (incr)}}} 
 & \multicolumn{1}{c|}{\textit{ (VaR 5\%)}}& 8\% & 17\% & 0\% & 58\% & 42\% & 0\% 
 & 17\% & 17\% & 0\% & 92\% & 50\% & 8\% \\
 \multicolumn{1}{c|}{} & \multicolumn{1}{|c|}{\textit{ (VaR 1\%)}} & 8\% & 17\% & 0\% & 58\% & 42\% & 8\% 
 & 33\% & 8\% & 0\% & 83\% & 50\% & 17\% \\
 
\hline

& \textit{} & \textbf{8\%} & \textbf{17\%} & \textbf{0\%} & \textbf{58\%} & \textbf{42\%} & \textbf{4\%} 
 & \textbf{25\%} & \textbf{13\%} & \textbf{0\%} & \textbf{88\%} & \textbf{50\%} & \textbf{13\%}\\
 \hline

 & \multicolumn{1}{c}{} & \multicolumn{6}{c}{\textit{\textbf{GARCH}}} &   \multicolumn{6}{c}{\textit{\textbf{EVT }}}  \\

 \hline
 \multicolumn{1}{c|}{\textit{VaR}} & \textit{1\%} 
 & 75\% & 58\% & 33\% & 100\% & 100\% & 67\% 
 & 50 \% & 33 \% & 0 \% & 100 \% & 58 \% & 25 \% \\
 \hline
 
 & & \textbf{75\%} & \textbf{58\%} & \textbf{33\%} & \textbf{100\%} & \textbf{100\%} & \textbf{67\%}
&  \textbf{50\%} & \textbf{33\%} & \textbf{0\%} &\textbf{100\%} & \textbf{58\%} & \textbf{25\%} \\
 \hline
 
\multicolumn{1}{c|}{\multirow{2}[4]{*}{\textit{$\Lambda VaR$ $1\%$ (decr)}}} 
 & \textit{ (VaR 5\%)}
 & 75\% & 58\% & 33\% & 100\% & 100\% & 67\% 
 & 67 \% & 58 \% & 0 \% & 92 \% & 58 \% & 42 \% 
\\
 
\multicolumn{1}{c|}{} & \textit{ (VaR 1\%)} 
& 92\%  & 67\% & 33\% & 100\% & 100\% & 75\% 
& 67 \% & 58 \% & 0 \% & 92 \% & 58 \% & 33 \%
 \\
 \hline

 & & \textbf{83\%} & \textbf{63\%} & \textbf{33\%} & \textbf{100\%} & \textbf{100\%} & \textbf{71\%}
&  \textbf{67\%} & \textbf{58\%} & \textbf{0\%} &\textbf{92\%} & \textbf{58\%} & \textbf{38\%}\\
 \hline
 
 \multicolumn{1}{c|}{\multirow{2}[4]{*}{\textit{$\Lambda VaR$ $1\%$ (incr)}}} 
 & \multicolumn{1}{c|}{\textit{ (VaR 5\%)}}
 & 83\% & 67\% & 67\% & 100\% & 100\% & 83\% 
& 17 \% & 25 \% & 8 \% & 58 \% & 42 \% & 0 \% 
 \\
 \multicolumn{1}{c|}{} & \multicolumn{1}{|c|}{\textit{ (VaR 1\%)}}
  & 83\% & 58\% & 67\% & 100\% & 92\% & 83\%
& 17 \% & 33 \% & 0 \% & 58 \% & 42 \% & 8 \%  \\
 
\hline

& & \textbf{83\%} & \textbf{63\%} & \textbf{67\%} & \textbf{100\%} & \textbf{96\%} & \textbf{83\%} 
 & \textbf{17\%} & \textbf{29\%} & \textbf{4\%} & \textbf{58\%} & \textbf{42\%} & \textbf{4\%}\\
 \hline

\end{tabular}
}}
\caption{\footnotesize{Time evolutions of Test 3 for the $\Lambda VaR$ models under 
different assumptions of the P$\&$L distribution. The table shows the evolution over the 
global financial crisis of the acceptance rates, aggregated at the level of the $\Lambda 
VaR$ models ($\min_x\Lambda(x)=0.5 \%$) calculated using the Historical, Normal, GARCH and EVT 
assumption of the P$\&$L distribution.}}
\label{Table6}
\end{table}
\end{center}

The results show that the GARCH assumption on the returns guarantees the highest accuracy in terms of average acceptance rate. Moreover, we notice here that the Historical and the EVT estimators of the increasing $\Lambda VaR$ often underperform the Normal one, in contrast with the previous tests. These outcomes are quite reasonable since this third test is based on simulations and points out the issue of estimating risk measures with distributions having cut-off tails (as the Historical) or based on a small range of values (as the EVT). 
However, such a preference for the Normal distribution is completely reversed by the other tests which privilege the assumption of distributions which rely more on tail events and not on the full shape of the distribution.

\section{Conclusions}\label{sec:conclusioni}

A new risk measure sensitive to tail risk, $\Lambda VaR$, has been recently introduced. However, an \textit{ad hoc} study on its 
backtesting has not been conducted in literature so far. The main issue for the $\Lambda VaR$ backtesting is that the probability of a violation is not constant, but may change at any time and for any asset. This consideration implies that the Kupiec-type backtesting framework, proposed by \cite{HP15}, fails to keep into account the effective 
predictive capacity of $\Lambda VaR$ as introduced by the $\Lambda$ function.

We propose three backtesting methodologies for $\Lambda VaR$ and we asses the accuracy of the new risk measure from different points of view. Test 1 and Test 2 are based on results of probability theory and allow for a straightforward application. Test 3 is performed by simulations and allows for more accurate comparison of $\Lambda VaR$ models estimated under different assumptions on the P$\&$L distribution.

The validity of our backtesting proposals is confirmed by the results of the empirical analysis. In fact, this study shows that $\Lambda VaR$ models perform better 
than $1$ $\% $ $VaR$, confirming the findings in \cite{HP15}. In addition, $\Lambda VaR$ computed with the GARCH model of returns has the highest level of coverage. This outcome substantiates what is well known in literature that fat-tailed asset return distributions explain better the real asset return behavior and allow for a more accurate risk coverage.

Moreover, our backtesting methods denote higher precision than the Kupiec-type test proposed by \cite{HP15} since they have been able to detect an estimation issue of $\Lambda VaR$ computed with a lower bound of $0.1\%$ as in the former study. 

Suggestions for future research include the study of the test power that would permit a more accurate comparison among these backtesting proposals.

\section*{Appendix} \label{appendix}
We recall hereafter the Lyapunov Theorem that is a result of probability theory based on the application of the central limit theorem to random 
variables that are independent but not identically distributed (see \citealt{L54}).

\begin{theorem}[Lyapunov]
Suppose $X_1, \, X_2, ...$ is a sequence of independent random variables, each 
with finite expected value $\mu_t$ and variance $\sigma^2_t$. Define
\[
s_n^2=\sum_{t=1}^T\sigma^2_t
\]
If for some $\delta > 0$, the ``Lyapunov's condition'' 
\[
\lim_{n\to\infty} \frac{1}{s_{T}^{2+\delta}} \sum_{t=1}^{T} \E\big[\,|X_{t} - \mu_{t}|^{2+\delta}\,\big] = 0
\]
is satisfied, then the following convergence in distribution holds as $T$ goes to 
infinity:
\[
\frac{1}{s_T} \sum_{t=1}^{T} (X_t - \mu_t) \ \xrightarrow{d}\ \mathcal{N}(0,\;1)
\]
\end{theorem}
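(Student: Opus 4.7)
The plan is to reduce the Lyapunov CLT to the Lindeberg CLT, which cleanly separates the moment condition from the probabilistic core of the argument. Without loss of generality I would center the variables and set $Y_t = X_t - \mu_t$, so that $\E[Y_t] = 0$, $\var(Y_t) = \sigma_t^2$, and the target becomes $s_T^{-1} \sum_{t=1}^T Y_t \xrightarrow{d} \mathcal{N}(0,1)$.

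First, I would verify that the Lyapunov hypothesis implies the Lindeberg condition
$$L_T(\epsilon) := \frac{1}{s_T^2} \sum_{t=1}^T \E\bigl[Y_t^2 \mathbf{1}_{\{|Y_t| > \epsilon s_T\}}\bigr] \longrightarrow 0 \quad \text{for every } \epsilon > 0.$$
This is the one genuinely elementary step: on $\{|Y_t| > \epsilon s_T\}$ we have $|Y_t|^\delta \geq (\epsilon s_T)^\delta$, hence $Y_t^2 \leq |Y_t|^{2+\delta}/(\epsilon s_T)^\delta$, and therefore
$$L_T(\epsilon) \leq \frac{1}{\epsilon^\delta\, s_T^{2+\delta}} \sum_{t=1}^T \E\bigl[|Y_t|^{2+\delta}\bigr] \longrightarrow 0$$
by assumption. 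As a free byproduct, the same bound with $\epsilon = 1$ combined with $\sigma_t^2 \leq \E[Y_t^2 \mathbf{1}_{|Y_t| \leq s_T}] + \E[Y_t^2 \mathbf{1}_{|Y_t| > s_T}]$ yields the uniform asymptotic negligibility $\max_t \sigma_t^2 / s_T^2 \to 0$, which I will need below.

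Second, I would prove the Lindeberg CLT via characteristic functions. Writing $\phi_t$ for the characteristic function of $Y_t/s_T$, independence gives $\phi_{S_T}(u) = \prod_{t=1}^T \phi_t(u)$. Using the standard pointwise bound $|e^{ix} - (1 + ix - x^2/2)| \leq \min(|x|^3/6, x^2)$ and splitting the expectation at the threshold $|Y_t| = \epsilon s_T$, Taylor expansion yields $\phi_t(u) = 1 - u^2 \sigma_t^2/(2 s_T^2) + r_t(u)$ with $\sum_t |r_t(u)| \leq \tfrac{|u|^3 \epsilon}{6} + u^2 L_T(\epsilon)$; sending $T \to \infty$ first and then $\epsilon \to 0$ gives $\sum_t |r_t(u)| \to 0$. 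The elementary inequality $|\prod_t a_t - \prod_t b_t| \leq \sum_t |a_t - b_t|$ for complex numbers of modulus at most $1$, applied to compare $\phi_{S_T}(u)$ with $\prod_t \exp(-u^2 \sigma_t^2/(2 s_T^2)) = e^{-u^2/2}$, then delivers pointwise convergence of the characteristic functions, and L\'evy's continuity theorem concludes. The main obstacle is precisely this last comparison: one must simultaneously keep $\sum_t |r_t(u)| \to 0$ under control and check that both $\phi_t(u)$ and $\exp(-u^2 \sigma_t^2/(2 s_T^2))$ have modulus $\leq 1$ uniformly, which is where the asymptotic negligibility $\max_t \sigma_t^2 / s_T^2 \to 0$ extracted in the first step is essential.
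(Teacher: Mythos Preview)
The paper does not actually prove this theorem: it is stated as a classical result and attributed to \cite{L54}, then used as a black box to build Test~2 (the paper's own work is Lemma~\ref{LemmaLyap}, verifying the Lyapunov condition for Bernoulli sequences). So there is no ``paper's proof'' to compare against; your proposal supplies a proof where the paper simply cites one.

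Your route---reduce Lyapunov to Lindeberg, then prove Lindeberg via characteristic functions and L\'evy's continuity theorem---is the standard textbook argument and is essentially correct. Two small points are worth tightening. First, the uniform negligibility $\max_t \sigma_t^2/s_T^2 \to 0$ does not follow from ``the same bound with $\epsilon=1$'': at $\epsilon=1$ the first term in the split $\sigma_t^2 \le \epsilon^2 s_T^2 + s_T^2 L_T(\epsilon)$ is just $s_T^2$. You need the split for \emph{arbitrary} $\epsilon>0$ (giving $\limsup_T \max_t \sigma_t^2/s_T^2 \le \epsilon^2$), or alternatively use Jensen to get $(\sigma_t^2/s_T^2)^{(2+\delta)/2} \le s_T^{-(2+\delta)}\E[|Y_t|^{2+\delta}]$ directly from the Lyapunov sum. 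Second, in the final product comparison you jump straight from $\phi_t(u)=1-\tfrac{u^2\sigma_t^2}{2s_T^2}+r_t(u)$ to comparing with $\exp(-u^2\sigma_t^2/(2s_T^2))$; strictly speaking you also need $\bigl|\exp(-x)-(1-x)\bigr|\le x^2/2$ summed over $t$, which is where $\max_t \sigma_t^2/s_T^2 \to 0$ is actually used. Both gaps are easily closed along the lines you indicate.
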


In the following lemma we show that the ``Lyapunov's condition'' is satisfied when 
$s_T^2=\sum_1^T \lambda_t(1-\lambda_t)$ and $\mu_t=\lambda_t$. 
\begin{lemma}
If $\{I_t\}$ is a sequence of independent random variables distributed as a 
Bernoulli with parameters $\{ \lambda_t \}_t$ and $\inf_{t} \lambda_t=\lambda_m>0$, then
\[
 \lim_{T\to\infty} \frac{1}{s_T^{2+\delta}}\sum_{t=1}^T\E[|I_t-\lambda_t|^{2+ \delta}]=0
\]
with $s_T^2=\sum_1^T \lambda_t(1-\lambda_t)$.
\label{LemmaLyap}
\end{lemma}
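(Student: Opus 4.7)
The plan is to exploit the uniform boundedness of the centered Bernoulli variables in order to bound the numerator of the Lyapunov ratio by the square of the denominator, and then to argue that $s_T \to \infty$ from the uniform range assumption on the $\lambda_t$. Since $I_t \in \{0,1\}$ and $\lambda_t \in (0,1)$, one has $|I_t - \lambda_t| \leq \max(\lambda_t, 1-\lambda_t) \leq 1$ almost surely, so for every $\delta > 0$ the pointwise inequality $|I_t - \lambda_t|^{2+\delta} \leq |I_t - \lambda_t|^{2}$ holds. Taking expectations yields
\[
\E\bigl[|I_t - \lambda_t|^{2+\delta}\bigr] \;\leq\; \E\bigl[|I_t - \lambda_t|^{2}\bigr] \;=\; \lambda_t(1-\lambda_t),
\]
and summing over $t=1,\dots,T$ gives $\sum_{t=1}^{T}\E[|I_t-\lambda_t|^{2+\delta}] \leq s_T^{2}$.

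Plugging this into the Lyapunov ratio reduces the problem to showing that $s_T \to \infty$, since
\[
\frac{1}{s_T^{2+\delta}}\sum_{t=1}^{T}\E\bigl[|I_t-\lambda_t|^{2+\delta}\bigr] \;\leq\; \frac{s_T^{2}}{s_T^{2+\delta}} \;=\; \frac{1}{s_T^{\delta}}.
\]
Then I would invoke the fact that, by the construction of the $\Lambda$ function in Section \ref{Main}, the probabilities $\lambda_t$ take values in the interval $(\lambda_m,\lambda_M)\subset(0,1)$; hence there exists a constant $c := \lambda_m(1-\lambda_M) > 0$ with $\lambda_t(1-\lambda_t) \geq c$ for every $t$, giving $s_T^{2} \geq cT \to \infty$, and therefore $s_T^{-\delta} \to 0$ as $T \to \infty$.

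The only subtle point, and the one I would flag in the write-up, is the role of the \emph{upper} bound $\lambda_t \leq \lambda_M < 1$. The statement of the lemma mentions only the lower bound $\inf_t \lambda_t = \lambda_m > 0$, but this alone is not enough: if $\lambda_t$ were allowed to approach $1$ fast enough (e.g.\ $1-\lambda_t = t^{-2}$), then $s_T^{2}$ would stay bounded and Lyapunov's condition would fail. The upper bound is legitimate here because it comes implicitly from the range $(\lambda_m,\lambda_M)$ of the function $\Lambda$ defined in Section \ref{Main}, but I would make this explicit in the proof rather than leave it to the reader to infer.
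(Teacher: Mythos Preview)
Your proof is correct and follows essentially the same route as the paper's. Both arguments use the bound $\E[|I_t-\lambda_t|^{2+\delta}]\le\lambda_t(1-\lambda_t)$ (the paper obtains it by an explicit computation of the $(2+\delta)$-th absolute moment, you by the pointwise inequality $|I_t-\lambda_t|\le1$) and then show $s_T\to\infty$ via a uniform lower bound on the individual variances. The only cosmetic difference is that the paper further relaxes the numerator to $T/4$ before dividing, whereas you keep it as $s_T^{2}$ and reduce the ratio directly to $s_T^{-\delta}$; your version is, if anything, slightly cleaner.

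Your observation about the missing upper bound is well taken and applies equally to the paper's own argument: the paper writes $\lambda_t(1-\lambda_t)\ge\lambda_m(1-\lambda_m)$ to get $s_T^{2}\ge T\lambda_m(1-\lambda_m)$, and this inequality likewise fails if $\lambda_t$ is allowed to approach $1$. In the intended application one has $\lambda_t\in(\lambda_m,\lambda_M)\subset(0,1)$ from the range of $\Lambda$, so the result holds in context; making this explicit, as you propose, is an improvement over the original write-up.
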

\begin{proof}
We observe that:
\begin{displaymath}
\begin{split}
\E[|I_t-\lambda_t|^{2+\delta}]&=(1-\lambda_t)\lambda_t^{2+\delta}+\lambda_t(1-\lambda_t)^{2+\delta}\\
&=\lambda_t(1-\lambda_t)\left(\lambda_t^{1+\delta}+(1-\lambda_t)^{1+\delta}\right)\leq \lambda_t(1-\lambda_t)\leq \frac14\,.
\end{split}
\end{displaymath} 
On the other hand we have
\[
s_T^{2+\delta}=\left(\sum_1^T \lambda_t(1-\lambda_t)\right)^{1+\frac{\delta}{2}}\geq
 \left(\sum_1^T \lambda_m(1-\lambda_m)\right)^{1+\frac{\delta}{2}}=
\left(T \lambda_m(1-\lambda_m)\right)^{1+\frac{\delta}{2}}\,.
\]
We can thus conclude that
\begin{displaymath}
\frac{\sum_{t=1}^T\E[|I_t-\lambda_t|^{2+\delta}]}{s_T^{2+\delta}} \leq 
\frac{T}{4\left(T \lambda_m(1-\lambda_m)\right)^{1+\frac{\delta}{2}}}\to 0
\end{displaymath}
as $T\to\infty $.
\end{proof}

The following proposition shows theoretical implications on the $Z_3$ test statistic in \eqref{Z3} under the null and alternative hypothesis.

\begin{proposition}\label{prop}
Under the test hypothesis $H_0$ as in \eqref{null} and $H_1$ as in \eqref{alternative2} we have:
\begin{enumerate}
\item $\E_{H_0}[Z_3]=0$
\item $\E_{H_1}[Z_3]<0$. 
\end{enumerate}
\end{proposition}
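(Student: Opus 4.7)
The plan is to compute the expectation of each summand of $Z_3$ term by term under the two hypotheses, using the fact that $I_t = \ind\{X_t < -\Lambda VaR_t\}$ is a Bernoulli whose mean, taken under the true law $F_t$, equals $F_t\bigl((-\Lambda VaR_t)^-\bigr)$, while by definition $\lambda^0_t = P_t\bigl((-\Lambda VaR_t)^-\bigr)$ (under the continuity of $P_t$ assumed in the paper, both expressions collapse to $F_t(-\Lambda VaR_t)$ and $P_t(-\Lambda VaR_t)$ respectively). Note that $\Lambda VaR_t$ itself is a function of $P_t$ only, so the same realization of the threshold appears in both hypotheses and does not need to be re-estimated.

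For part (1), under $H'_0$ we have $F_t = P_t$ for every $t$, hence
\[
\E_{H'_0}[I_t] \;=\; F_t(-\Lambda VaR_t) \;=\; P_t(-\Lambda VaR_t) \;=\; \lambda^0_t\,.
\]
Each summand $\lambda^0_t - I_t$ has zero expectation, and by linearity $\E_{H'_0}[Z_3]=0$.

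For part (2), under $H'_1$ we have $F_t \geq P_t$ everywhere with strict inequality at the point $-\Lambda VaR_t$ for at least one index $t^*$ (this is the content of the hypothesis as written, combined with the fact that $\Lambda VaR(F_t)>\Lambda VaR(P_t)$ in the formulation \eqref{ipotesi acerbi1} translates, through the definition \eqref{eq LVaR2}, into the true tail probability exceeding the predicted one at the model's threshold). Consequently, for every $t$,
\[
\E_{H'_1}[\lambda^0_t - I_t] \;=\; P_t(-\Lambda VaR_t)-F_t(-\Lambda VaR_t)\;\leq\; 0,
\]
with strict inequality at $t=t^*$. Summing over $t$ and dividing by $T$ yields $\E_{H'_1}[Z_3]<0$.

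The only real obstacle is notational/technical: tying the functional statement ``$F_t > P_t$'' in \eqref{ipotesi acerbi1} to a pointwise strict inequality at the specific argument $-\Lambda VaR_t$ produced by the model $P_t$. The paper sidesteps this by stating the sharper form \eqref{eq: ipotesi nostre} and by assuming enough continuity so that $\lambda^0_t$ coincides with $\Lambda_t(-\Lambda VaR_t)$; once this identification is made, both parts of the proposition are immediate consequences of the monotonicity of the CDF evaluated at the model's VaR level and of the linearity of expectation.
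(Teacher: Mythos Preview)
Your proof is correct and follows essentially the same approach as the paper: both compute $\E[I_t]$ term by term, identifying it with $\lambda^0_t$ under $H'_0$ and with some $\lambda_t>\lambda^0_t$ under $H'_1$, then conclude by linearity of expectation. Your version is simply more explicit about the identification $\E[I_t]=F_t(-\Lambda VaR_t)$ and the continuity needed to equate $\lambda^0_t$ with $P_t(-\Lambda VaR_t)$, whereas the paper states directly that $I_t\sim B(\lambda^0_t)$ under $H'_0$ and $I_t\sim B(\lambda_t)$ with $\lambda_t>\lambda^0_t$ under $H'_1$.
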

\begin{proof}
It is enough to notice that under $H_0$, $I_t\sim B(\lambda^0_t)$ so that $\E_{H_0}[I_t-\lambda^0_t]=0$, 
which implies $$\E_{H_0}[Z_3]=\frac1T \sum \E_{H_0}[\lambda^0_t-I_t]=0\,. $$
In a similar way, under $H_1$, since $I_t\sim B(\lambda_t)$ with $\lambda_t > \lambda^0_t$, we obtain that $\E_{H_1}[Z_3]<0$.

\end{proof}


\begin{thebibliography}{AD1999}

\bibitem[Acerbi and Szekely (2014)]{AS14} Acerbi, C., and B. Szekely. 2014. "Back-testing Expected Shortfall." Risk 27 (11).

\bibitem [Berkowitz et al.(2011)]{BCP11} Berkowitz, J., P. Christoffersen, and D. Pelletier. 2011. "Evaluating Value-at-Risk Models with Desk-Level Data." Management Science 57: 2213-2227.

\bibitem[Basel Committee on Banking Supervision (1996)]{BIS96b} Basel Committee on Banking Supervision. 1996. "Supervisory Framework for the Use of Backtesting in Conjunction with the Internal Models Approach to Market Risk Capital Requirements." Bank for International Settlements.

\bibitem[Fundamental review of the trading book (2013)]{FRTB} Basel Committee on Banking Supervision. 2013. "Fundamental Review of the Trading Book." 2nd consultative document. Bank for International Settlements.

\bibitem[Campbell(2005)]{C05} Campbell, S. 2005. "A Review of Backtesting and Backtesting Procedures." Finance and Economics Discussion Series. Divisions of Research $\&$ Statistics and Monetary Affairs, Federal Reserve Board, Washington D.C.

\bibitem[Christoffersen (2010)]{C10} Christoffersen, P. 2010. "Encyclopedia of Quantitative Finance - Backtesting." John Wiley and Sons.

\bibitem[Frittelli, Maggis, and Peri (2014)]{FMP}Frittelli, M., M. Maggis, and I. Peri. 2014. "Risk Measures on and Value at Risk with Probability/Loss Function." Mathematical Finance 24 (3): 442-463.

\bibitem[Hitaj, Mateus, and Peri (2015)]{HP15}Hitaj, A., C. Mateus, and I. Peri. 2015. "Lambda Value at Risk and Regulatory Capital: a Dynamic Approach to Tail Risk." Working paper available at http://ssrn.com/abstract=2932475.

\bibitem[Kerkhof and Melenberg (2004)]{KM04} Kerkhof, J., and B. Melenberg. 2004. "Backtesting for Risk-Based Regulatory Capital." Journal of Banking $\&$ Finance 28 (8): 1845-1865.

\bibitem[Kupiec (1995)]{K95} Kupiec, P. 1995. "Techniques for Verifying the Accuracy of Risk Measurement Models." Journal of Derivatives 3 (2): 73-84.

\bibitem[Lyapunov (1954)]{L54} Lyapunov, A. M. 1954. Collected Works 1.

\bibitem[McNeil (1999)]{McNeil} McNeil, A.J. 1999. "Extreme Value Theory for Risk Managers." Internal Modeling and CAD II. London: Risk Books, 93-113.

\end{thebibliography}
\end{document}